\newcommand{\FFF}{\vspace*{\bigskipamount}}
\newcommand{\BBB}{\vspace*{-\bigskipamount}}
\newcommand{\cO}{\mathcal{O}}
\newcommand{\Paragraph}[1]{\BBB\paragraph{#1}}
\newcommand{\remove}[1]{}
\newlength{\figurewidth}
\newlength{\figureheight}
\newlength{\captionwidth}
\newcommand{\qed}{\hfill $\square$ \smallbreak}
\newenvironment{proof}{\noindent{\bf Proof:}}{\qed}
\newtheorem{theorem}{Theorem}
\newtheorem{proposition}{Proposition}
\newtheorem{conjecture}{Conjecture}
\begin{document}

\addtolength{\baselineskip}{1pt}        
   
\addtolength{\parskip}{2pt}

\title{Broadcasting on Adversarial Multiple Access Channels \vfill}

\author{Bader A. Aldawsari \footnotemark[1]
	\and
	Bogdan S. Chlebus \footnotemark[2]
	\and
	Dariusz R. Kowalski \footnotemark[2]}

\footnotetext[1]{	Department of Computer Science and Engineering, University of Colorado Denver, Denver, Colorado, USA.}

\footnotetext[2]{	School of Computer and Cyber Sciences, Augusta University, Augusta, Georgia, USA.}

\date{}

\maketitle

\vfill


\begin{abstract}
We study deterministic distributed algorithms for broadcasting on multiple-access channels.
Packet injection is modeled by leaky-bucket adversaries.
There is a fixed set of stations attached to a channel.
Additional features of the model of communication include an upper bound on the number of stations activated in a round, an individual injection rate, and randomness in generating and injecting packets.
We demonstrate that some broadcast algorithms designed for ad-hoc channels have bounded latency for increased ranges of injection rates than in ad-hoc channels when executed on channels with a fixed number of stations against adversaries that can activate at most one station per round.
Individual injection rates are shown to impact latency, as compared to the model of general leaky bucket adversaries.
Outcomes of experiments are given that compare the performance of broadcast algorithms against randomized adversaries.
The experiments include deterministic algorithms and randomized backoff algorithms.

\vfill

\noindent
\textbf{Key words:}
multiple-access channel, 
distributed broadcast,
adversarial packet injection,
queue stability,
packet latency.
\end{abstract}

\vfill

~

\thispagestyle{empty}

\setcounter{page}{0}


\newpage

\section{Introduction}

\label{sec:introduction}

We consider distributed broadcast algorithms on multiple access channels.
The goal is to compare their performance with respect to packet latency and maximum queue size when packet generation is constrained by adversarial models.
We investigate how the properties of channels and kinds of adversaries and classes of algorithms interplay among themselves to impact the performance of broadcasting.

Adversarial queuing is a methodology that can capture stability of communication without any statistical assumptions about traffic.
It can provide a framework for worst-case bounds on performance of deterministic distributed algorithms.
This approach was proposed by Borodin et al.~\cite{BorodinKRSW-JACM01} in their study of routing algorithms in store-and-forward networks, and continued by Andrews et al.~\cite{AndrewsAFLLK-JACM01}.
Berger et al.~\cite{BergerKS14} studied engineering aspects of adversarial queuing modeling packet injection.
The latency of routing in adversarial queueing was studied by Aiello et al.~\cite{AielloKOR-JCSS00}, Andrews et al.~\cite{AndrewsFGZ-JACM05}, and Rosen and Tsirkin~\cite{RosenT-TCSy06}.
A survey of adversarial contention resolution was given by Banicescu et al.~\cite{BanicescuCGY2024}.

Multiple-access channels model contention occurring in local-area networks utilizing Ethernet protocols, see Metcalfe and Boggs~\cite{MetcalfeB76}.
Broadcasting algorithms for multiple-access channels need to resolve contention for access and typically use randomness in their design; see Chang et al.~\cite{ChangJP19}, Zhou et al.~\cite{ZhouCY22}, and a survey by Chlebus~\cite{Chlebus-chapter-2001}.
Throughput of channels using Ethernet protocols was investigated by Bianchi~\cite{Bianchi00} when the network is saturated, in the sense that each node has always a packet to transmit.
Kwak et al.~\cite{KwakSM05} studied saturated throughput of variants of backoff broadcast  on multiple-access channels. 

Stability of randomized communication algorithms on multiple-access channels can be considered in  the queue-free model, in which a  packet gets associated with a new station at the time of injection; see Chlebus~\cite{Chlebus-chapter-2001} for a survey of this topic.
The model of a fixed set of stations each with its own queue  appears to have a stabilizing effect on randomized broadcasting.
This was demonstrated by H\aa stad et al.~\cite{HastadLR-SICOMP96}, Al-Ammal et al.~\cite{Al-AmmalGM-TCSy01} and  Goldberg et al.~\cite{GoldbergJKP04}, who investigated the range of injection rates for which the binary exponential backoff is stable, as a function of the number of stations.
The efficiency of backoff-related algorithms for the batched-arrival and slack-constrained models of packet creation was studied by Bender et al.~\cite{BenderKPY18}, Bender et al.~\cite{BenderFGY19}, Bender et al.~\cite{BenderKKP20}, Agrawal et al.~\cite{AgrawalBFGY20}, and Anderton et al.~\cite{AndertonCY21}.

Broadcasting in multiple-access channels with queue-free stations in the framework of adversarial queuing was first studied by Bender et al.~\cite{BenderFHKL-SPAA05}.
Deterministic distributed broadcast performed by stations with queues was introduced in the adversarial setting by Chlebus et al.~\cite{ChlebusKR-TALG12}.
The maximum throughput in such a setting was studied in Chlebus et al.~\cite{ChlebusKR-DC09}; that paper demonstrated that the ultimate throughput of~$1$ was achievable.
Anantharamu et al.~\cite{AnantharamuCR-TCS17} extended the work on throughput~$1$ to adversarial models with individual injection rates.
Anantharamu and Chlebus~\cite{AnantharamuC15} developed deterministic distributed broadcast algorithms for ad-hoc $1$-activating channels; that work also demonstrated that no broadcast algorithm can be universal in such channels.
Anantharamu et al.~\cite{AnantharamuCKR-JCSS19} investigated the latency of adversarial broadcasting by deterministic algorithms.
Hradovich et al.~\cite{HradovichKK21} considered deterministic broadcasting on adversarial multiple-access channels when the adversary has to continually maintain the maximum allowed injection rate.   
Energy constraints in multiple-access channels were studied by Chlebus et al.~\cite{ChlebusHJKK-SPAA19} for adversarial routing and by Jiang and Zheng~\cite{JiangZ22}  for collision resolution with batch arrivals.
The impact of jamming on collision resolution was studied by Anantharamu et al.~\cite{AnantharamuCKR-JCSS19} for broadcasting with adversarial packet arrivals and by Chen et al.~\cite{ChenJZ21}.

\Paragraph{An overview of the document.}

Section~\ref{sec:technical-preliminaries} reviews the relevant properties of shared channels and communication algorithms for broadcasting on such channels.
Section~\ref{sec:adversarial-packet-injection} discusses adversarial models of packets injection for dynamic broadcasting. 
We propose a process of injecting packets  called ``four step bucket process.''
This process is shown in Proposition~\ref{pro:one} to be equivalent to the leaky-bucket packet-generation model.
We use the four step bucket model to introduce randomized leaky-bucket adversaries.
Proposition~\ref{pro:simulating-adversary} shows that any sequence of numbers of packets generated by a leaky-bucket adversary of injection rate $\rho$ and burstiness~$\beta$ can be generated by a randomized leaky-bucket adversary of injection rate $\rho$ and burstiness~$\beta$ with a positive probability. 
The advantage of a randomized leaky-bucket adversary is that it can be simulated in a natural manner.
Section~\ref{sec:broadcast-algorithms} reviews broadcast algorithms discussed in detail later on.
These algorithms are known in the literature, and we provide brief descriptions for completeness sake.
Section~\ref{sec:worst-case-bounds} presents new worst-case upper bounds on the latency of adversarial broadcasting.
They are for two specific cases.
One is for algorithms designed for a perpetual channel with named stations when leaky bucket adversaries are restricted to have individual injection rates.
We investigate the feature of algorithms known as ``old-go-first'' in this context. 
The other is for algorithms designed for ad-hoc channels when they are executed on perpetual channels.
We investigate the range of injection rates for which latency is bounded, and in particular strong universality of broadcast algorithms.
Section~\ref{sec:simulations-randomized-adversaries} proposes a methodology of experiments of adversarial packet injections and presents outcomes of specific experiments.
Section~\ref{sec:conclusion} concludes with final remarks.

An earlier version of this work was presented at the $18$th IEEE International Symposium on Network Computing and Applications (NCA 2019)~\cite{AldawsariCK19}.

\section{Technical Preliminaries}

\label{sec:technical-preliminaries}

A multiple-access channel is a broadcast network that makes it possible for  each user to transmit a message directly to all the other users.
A multiple-access channel consists of a shared communication medium and communication agents using the shared medium; the agents are called \emph{nodes} or \emph{stations}. 
We consider a synchronous variant of such networks, in which an execution of a communication algorithm is structured as a series of consecutive rounds.

Shared channels can be categorized depending on whether the set of nodes is set permanently or it may change in time. 
A \emph{perpetual channel} has a permanent set of some $n$  nodes, which have fixed names to identify them; a node's \emph{name} is a unique integer in the range $[0,n-1]$.
In an \emph{ad-hoc channel}, there is a potentially unbounded supply of nodes, which do not have any individual identifiers.
Nodes may join and leave an ad-hoc channel, while only finitely many nodes participate actively in an execution of a communication algorithm at any round.

Packets obtained by the stations are encapsulated in messages, which then are  broadcast on the channel.
A transmitted message contains at most one packet.
The duration of a round and the size of a message are mutually scaled such that it takes precisely a whole round to transmit a message.
This means that transmissions of messages by different stations overlap if and only if they occur in the same round.

\Paragraph{Sharing a channel.}

Each station obtains feedback from the channel in every round.
The feedback in a round is the same for each station, and does not depend on whether the station transmits a message at this round or not.
In particular, a transmitting node and a pausing node receive the same feedback.

A node \emph{hears} a message when it receives it successfully.
It is the critical property of multiple-access channels how the multiplicity of concurrent transmissions affects hearing the transmitted messages.
There are the following three relevant cases of the multiplicities of transmissions in a round: no nodes transmit, exactly one node transmits, and multiple (more than one) nodes transmit.
When multiple nodes transmit in a round then this is referred to as a \emph{collision} occurring in this round.
These three cases are discussed next.

If no node transmits in a round then the channel is \emph{silent}, which is reflected by the corresponding \emph{silence} feedback received from the channel by every node.
If only one node transmits, then all the nodes can hear the transmitted message as  the feedback they receive from the channel, including  the transmitting node.
If multiple nodes transmit in the same round, which creates a collision, then the effect is such that the messages interfere with one another and none can be heard by any attached node.
If a collision occurs in a round then every node obtains a \emph{collision signal} as the feedback from the channel.
A round in which no message is heard on the channel is called \emph{void}; a void round is either silent or created by a collision.
Multiple-access channels come in two variants, which are determined by the feedback coming from a channel when a message is not heard.
A channel \emph{without collision detection} has the property that both silence and collision signals are identical.
In a channel \emph{with collision detection} these respective signals are different, so that the nodes can distinguish between the two cases of silence and collision.

\Paragraph{Dynamic broadcasting.}

Stations participate in executing a distributed broadcast algorithm.
The goal is to perform \emph{dynamic broadcasting}, in the sense that packets get generated and then injected into the nodes continually, while  the nodes strive to have these packets heard on the channel.
All the stations begin executing a given algorithm together starting from the  first round.

Dynamic packet generation  leads to additional categorization of shared channels.
In the \emph{queue-free} model, each generated packet gets injected into a new station that has been previously passive.
Such an injection  makes the station active for the duration of handling the packet; this model was considered for example in Bender et al.~\cite{BenderFHKL-SPAA05}.
In the \emph{queuing} model, each station can handle multiple packets at the same time.
The packets handled by a station get stored in a private buffer space referred to as this station's \emph{queue}; this model was considered for example in H{\aa }stad et al.\cite{HastadLR-SICOMP96}.
In this paper we consider the queueing model.  
Every station has a potentially unbounded buffer to store queued packets.
A packet queued by a station gets dequeued immediately after it has been heard on the channel.
Packets do not get dropped by stations without being heard on the channel, neither due to timeouts nor for any other reason. 

Let the nodes that have packets to transmit in the beginning of a round be called \emph{active} in this round and \emph{passive} otherwise.
The event of injecting a packet into a passive station results in an \emph{activation} of this station.
If a positive integer~$k$ is an upper bound on the number of stations that can be activated in a round, then the communication channel is called \emph{$k$-activating}.

In this paper, we consider a perpetual channel with some number~$n$ of stations and the adversaries constrained by the $1$-activation constraint.
The motivation for having at most one node activated in a round comes from the real-world applications, when new packets  typically get injected into nodes that are already active, and rarely a passive node will get activated by obtaining packets to broadcast.
This latter interpretation means also that having multiple passive nodes activated in a round occurs so rarely  that it can be disregarded without distorting the performance of broadcasting, as modeled in simulations.

Historically, perpetual channels without constraints on the number of activations in a round were typically studied.
Having a constant number of named stations allows to develop deterministic broadcast algorithms with bounded latency for all injection rates $\rho<1$ and stable algorithms for injection rate $\rho = 1$, see the work by Chlebus et al.~\cite{ChlebusKR-TALG12,ChlebusKR-DC09} and Anantharamu et al.~\cite{AnantharamuCKR-JCSS19}.
Restricting packet injections to $1$-activating patterns allows to develop deterministic broadcast algorithms for dynamic channels that are stable for sufficiently small injection rates, as was shown in Anantharamu et al.~\cite{AnantharamuC15}.

\Paragraph{Categories of broadcast algorithms.}

A broadcast algorithm is \emph{full sensing} when all the nodes listen to the channel at all times. 
We understand ``listening to the channel'' as undergoing state transitions determined by the feedback from the channel, which is opposed to ``ignoring the feedback'' by idling in an initial state.
An algorithm is \emph{activation based} when a passive station  ignores the feedback from the channel and starts participating in a round when it gets activated by having a packet injected into it.
An active station executing an activation-based algorithm listens to the channel when it has packets to transmit, but it stops listening to the channel as soon as it has no pending packets.
An activation-based algorithm is called \emph{acknowledgement based} when a station resets its state to an initial one after a successful transmission of a message.

A message transmitted on the channel includes at most one packet but it may consist of only control bits.
If an algorithm does not use control bits at all, then messages transmitted in  the course of its execution contain only packets; such an algorithm is called a \emph{plain packet} one.
A node executing a plain-packet algorithm cannot transmit a message at all if it does not have a pending packet in its queue.
A general algorithm that allows stations to transmit messages with control bits is also called \emph{adaptive}.
A node executing an adaptive algorithm may transmit a message consisting of only control bits.

The actions performed in a round by a node executing a full-sensing algorithm or an activation-based one by an active station are as follows.
The node first either transmits a message or pauses, as determined by its state.
Then the nodes obtains feedback from the channel, which is the same at all the stations.
Next, the node may have a number of packets injected into it at this round.
Finally, the node performs local computation, which can be interpreted as a state transition.
Local computation involves the following actions.
If packets were injected then they are enqueued in the node's queue.
The private variables get updated, depending on what occurred in the round up to this moment.
The node decides if to transmit at the next round or not, and if so then it builds a message to be transmitted.
In particular, if the queue includes packets then the message to be transmitted may include a packet.

\Paragraph{Performance metrics.}

The number of packets in a station's queue in a round is this stations's \emph{queue size} at the round.
If a packet is injected into a station in a round~$t_1$ and is heard on the channel in a round $t_2>t_1$ then $t_1-t_1$ is the number of rounds it spends in the queue, which is this packet's \emph{delay}.

The performance of broadcast algorithms is assessed with respect to two performance metrics.
\emph{Latency}  is the maximum packet delay in an execution.
The \emph{number of queued packets} or simply the \emph{queues} is the maximum sum of the queue sizes of the stations in a round in an execution.
A broadcast algorithm is \emph{stable} in an execution if there is an upper bound on the number of queued packets.

\section{Adversarial Packet Injection}

\label{sec:adversarial-packet-injection}

The  quantitative restrictions on packet generation and injections are formulated as adversarial models.
The adversaries we consider are all specializations of the leaky-bucket concept.
The behavior of an adversary can be understood as an \emph{execution}, which proceeds through consecutive rounds.
At each round of such an execution, the adversary first generates packets and next injects these packets into stations.
\emph{Generation} means producing a finite number of packets in a round.
\emph{Injection} denotes enqueuing each generated packet into a queue at  some station.

Adversarial models provide a framework that allows to study the worst-case performance of broadcasting, which is its greatest appeal.
We can build randomness into an adversarial model to capture average performance as well.
An adversary that has either the process of generation of packets or their injection determined randomly is called \emph{randomized} and otherwise it is \emph{worst case}.
The leaky-bucket adversarial paradigm will be considered for both the worst-case and randomized-case senses. 
An adversary may have an upper bound on the number of stations activated in a round as an additional parameter, in particular, it could be $k$-activating, for an integer $k>0$.
Similarly, an adversary may be additionally constrained by  upper bounds on the frequency of packets injected into each individual station.

\Paragraph{Leaky-bucket adversaries.}

A leaky-bucket adversary is determined by a pair of numbers.
One is the \emph{injection rate}, denoted $\rho$, which is a real number satisfying $0<\rho\le 1$.
The other is the \emph{burstiness component}, denoted~$\beta$, which is a real number satisfying $\beta\ge 1$.
Together they make the \emph{type $(\rho,\beta)$} of the adversary.
For a contiguous time interval~$\tau$ of $|\tau|$ rounds, the adversary may generate up to $\rho\cdot |\tau| + \beta$ packets during the rounds in~$\tau$.

The definition of an adversary of the type $(\rho,\beta)$ constraints both average numbers of created packets in large intervals and also bursts of numbers of packets generated in short intervals. 
An injection rate~$\rho$ can be interpreted as the average number of generated packets,  if  averaging over large intervals.
Indeed, the average number created in intervals of $t$ rounds is at most the following: $\frac{\rho \cdot t+\beta}{t}\rightarrow \rho$, where  $t\rightarrow\infty$.
The adversary may generate at least one packet in a round and there is an upper bound on the number of packets created in one round.
The maximum number of packets that can be generated in one round is the \emph{burstiness} of the adversary.
Indeed, let $\tau$ be a time interval of just one round: $|\tau|=1$.
Then at most $\rho\cdot |\tau| + \beta = \rho+\beta $ new packets can be generated in $\tau$, and the following inequalities hold: $1< \rho+\beta\le 1+\beta$.
A generated packet is immediately injected into a station by enqueuing it in the station's queue.
The adversary also determines where to inject each newly generated packet, which may be constrained though a further stipulation of the adversarial model.

\Paragraph{The four step bucket process.}

Next, we describe a process of controlling the number of packets injected in a round by a leaky-bucket adversary of a given type  through the concept of a bucket.
We refer to the following process as  \emph{four-step bucket process of type $(\rho,\beta)$}.

The four-step process proceeds as an execution through a sequence of rounds.
The actions in a round are determined as follows.
Let $D$ be a variable, which we call a \emph{bucket}.
The bucket variable~$D$ is initialized to~$\beta$ at round zero.
The number of packets generated in each of the following rounds is determined in  four steps:
\begin{enumerate}
\item
First, reset the bucket to $D\leftarrow \min[D+\rho,\beta]$.
\item
Then, choose a non-negative integer $X$. 

\item
Next, set the number of actually generated packets to  $j\leftarrow \min\{\lfloor D\rfloor,X\}$.
\item
Finally, update the bucket variable to $D\leftarrow D-j$.
\end{enumerate}
The first step represents the bucket \emph{leaking} at rate~$\rho$, which is the rate with which the bucket's available capacity grows.
The second step involves a proposed quantity $X$ of packets to be generated, assuming this quantity is consistent with the adversarial model.
The third step represents a verification of the  available capacity of the bucket $\lfloor D\rfloor$, as determined by the recent generations of packets and continuous leaking.
The fourth step updates the available capacity of the bucket to account for the number~$j$ of packets created in this round.

\begin{proposition}
\label{pro:one}

Consider a sequence of packets generated at each round by an adversary in an unbounded execution.
The sequence is consistent with a leaky-bucket adversarial type $(\rho,\beta)$ if and only if it is consistent with the four-step bucket process  of type $(\rho,\beta)$. 
\end{proposition}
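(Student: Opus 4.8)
The plan is to prove the two implications of the equivalence separately, after one preliminary reduction. Write $(j_t)_{t\ge 1}$ for the numbers of packets generated round by round. First I would observe that a sequence is producible by the four-step process if and only if it is produced by the \emph{greedy} run of that process, namely the one in which the integer~$X$ selected at the second step of round~$t$ is taken to be $j_t$ itself. This is because the bucket variable~$D$ evolves in exactly the same way along any run fed $X = j_t$ as along one fed any larger value of~$X$, so the only obstruction to the process generating exactly $j_t$ packets at round~$t$ is that the third step truncates, i.e.\ that $\lfloor D\rfloor < j_t$ at that instant. Thus the whole problem becomes one of bounding how low~$D$ can be driven by the leak of the first step and the drain of the fourth.

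For the implication ``four-step process $\Rightarrow$ leaky-bucket type'', I would first establish the invariant that $0 \le D \le \beta$ at the end of every round: nonnegativity because the fourth step removes $j \le \lfloor D\rfloor \le D$, and the upper bound because the first step caps at~$\beta$. Writing $D_t$ for the value of~$D$ after the fourth step of round~$t$, with $D_0 = \beta$, one has $j_t = \min[D_{t-1}+\rho,\beta] - D_t \le D_{t-1} + \rho - D_t$, so summing over the rounds of any interval~$\tau$ telescopes and the two endpoint terms are absorbed by $0 \le D \le \beta$, giving $\sum_{t\in\tau} j_t \le \rho\cdot|\tau| + \beta$, which is exactly the leaky-bucket constraint on~$\tau$. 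This direction is essentially bookkeeping.

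For the converse, ``leaky-bucket type $\Rightarrow$ four-step process'', I would run the greedy process under the hypothesis that all interval bounds hold and argue that no truncation ever occurs. It is convenient to track the deficit $C_t := \beta - D_t$, which by the first and fourth steps obeys the Lindley-type recursion $C_0 = 0$ and $C_t = \max[0, C_{t-1}-\rho] + j_t$; unrolling it gives the closed form $C_t = \max_{1\le a\le t}\bigl(\sum_{s=a}^{t} j_s - (t-a)\rho\bigr)$. The greedy run succeeds at round~$t$ precisely when the post-leak capacity $\min[D_{t-1}+\rho,\beta]$ is at least~$j_t$; substituting $D_{t-1} = \beta - C_{t-1}$ and the closed form turns this into a finite list of inequalities, one for each interval ending at round~$t$, which I would then match against the leaky-bucket constraints. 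Doing this for all~$t$, with an induction that is legitimate because the closed form for $C_{t-1}$ refers only to rounds before~$t$, establishes the implication.

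The step I expect to be the real obstacle is exactly that last matching of the bucket-derived inequalities with the leaky-bucket definition: one has to pin down the closed form of the bucket level, identify which interval attains the maximum in the Lindley formula, and confirm that the leaky-bucket hypothesis supplies precisely what is needed there --- with particular care about the first round, where the leak of the first step is capped off immediately, and about the floor in the third step, so that the integrality of the~$j_t$ is used correctly. The telescoping direction, by contrast, goes through immediately once the invariant $0 \le D \le \beta$ is recorded.
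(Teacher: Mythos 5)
Your forward direction (four-step process $\Rightarrow$ leaky-bucket type) is correct and is essentially the paper's own argument, tidied up: the invariant $0\le D\le \beta$ plus the telescoping bound $\sum_{t\in\tau}j_t\le D_{t_1-1}-D_{t_2}+\rho|\tau|\le \beta+\rho|\tau|$. The reduction to the greedy run with $X=j_t$, and the Lindley recursion $C_t=\max[0,C_{t-1}-\rho]+j_t$ with closed form $C_t=\max_{1\le a\le t}\bigl(\sum_{s=a}^{t}j_s-(t-a)\rho\bigr)$, are also correct, and are a sharper formalization of the bucket dynamics than anything the paper writes down.

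The gap is in the converse, and it sits exactly at the step you flagged as the obstacle: the matching fails. Unwinding your own formulas, the greedy run succeeds at round $t$ if and only if $\sum_{s=a}^{t}j_s\le (t-a)\rho+\beta$ for every $1\le a\le t$; the coefficient is $(t-a)$, not $(t-a+1)$, because the leak at the first round of the window $[a,t]$ is absorbed by the cap $D\le\beta$ in step~1 (the worst case being a full bucket at the end of round $a-1$). The leaky-bucket definition applied to $[a,t]$, an interval of $t-a+1$ rounds, only yields $\sum_{s=a}^{t}j_s\le (t-a+1)\rho+\beta$, which is weaker by $\rho$, and this discrepancy is not an artifact of your bookkeeping. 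Concretely, with $\rho=1$ and $\beta=1$ the sequence $j_1=2$, $j_t=0$ for $t\ge 2$ satisfies every leaky-bucket constraint ($2\le 1\cdot 1+1$), yet the bucket process can emit at most $\lfloor\beta\rfloor=1$ packet at round~$1$. So the implication ``leaky-bucket $\Rightarrow$ four-step process'' cannot be established under the paper's stated reading of $|\tau|$ as the number of rounds in $\tau$. For what it is worth, the paper's own proof of this direction never verifies non-truncation at all, and in the other direction it silently uses the bound $(t_2-t_1)\rho+\beta$ for $\tau=[t_1,t_2]$; under that convention (leak counted per round boundary rather than per round) your inequalities match term for term and your argument closes. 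You should either adopt that convention explicitly or record that the proposition, as literally stated, fails in this direction --- but the machinery you built is the right machinery, and you correctly predicted where the hole would be.
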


\begin{proof}
Consider the four-step bucket process.
We may observe that the invariant $0\le D\le \beta$ holds after each round.
We show this by induction on the round numbers.
The base case of induction follows from the initialization of $D$ to~$\beta$.

For the inductive step, assume that the invariant holds in a round and consider the next round.
The bucket will satisfy $D\le \beta$ in the next round because the first step of generation is the only one when $D$ could be increased.
The bucket will satisfy $D\ge 0$ in the next round because its decrease in the fourth step is preceded by a verification in the third step. 

Now, continuing the inductive step, we are ready to show that if a sequence of numbers of packets generated at each round is consistent with the four-step bucket manipulation of type $(\rho,\beta)$ then it is consistent with a leaky-bucket adversary of type $(\rho,\beta)$.
Let $D$ be the bucket variable and let $\tau=[t_1,t_2]$ be a time interval.
The variable $D$ is at most $\beta$ when round~$t_1$ begins, by the invariant. The bucket~$D$ is incremented by at most~$\rho$ in the beginning of each round in~$\tau$, for the total of at most $\rho \cdot |\tau|$ during all the rounds in~$\tau$.
This can be at most balanced by injecting packets, suitably constrained by updating $D$ in the fourth step.
Generating each packet results in decrementing the variable~$D$ to $D-1$ in the fourth step.
Such a decrement is always possible to perform, because during the third step, it is verified that the number of packets generated in a round does not surpass~$D$.
It follows that the adversary can inject at most $(t_2-t_1)\rho +\beta$ packets during~$\tau$.

Next, we show that if a sequence of numbers of packets generated at each round is consistent with a leaky-bucket adversary of type $(\rho,\beta)$ then it is consistent with the four-step bucket manipulation  of type $(\rho,\beta)$. 
Let $\tau=[0,t]$ be a time interval, for $t>0$.
The bucket variable is initialized to~$\beta$ at round~$0$, which allows to inject $\lfloor \beta\rfloor$ packets in the first round.
The bucket variable is an upper bound on the number of packets that can be injected at each round, per the third step, and is decreased only to record actual packet generation in the fourth step.
So the total number of packets generated in $\tau$ is accounted for by all the decrements of~$D$.
\end{proof}

\Paragraph{Randomized adversaries.}

We define \emph{randomized leaky-bucket adversaries} of type~$(\rho,\beta)$, where $0< \rho \le 1$ and $\beta\ge 1$.
The adversarial model is determined through the four-step bucket manipulation process.

Any sequence of numbers of generated packets is consistent with the definition of a leaky-bucket adversary of type~$(\rho,\beta)$, by Proposition~\ref{pro:one}.
The adversary is defined by how the number of packets~$X$ in the third step of packet generation is determined.
We treat $X$ as a random variable in this randomized adversarial model.
The random variable~$X$ has a Poisson distribution with parameter $\lambda>0$ if $\Pr (X=i) =e^{-\lambda}\cdot \frac{\lambda^i}{i!}$, for each integer $i\ge 0$; see Mitzenmacher and Upfal~\cite{MitzenmacherUpfal-book17}.

The adversary uses a random variable~$X$ that is a Poisson distribution with parameter~$\lambda$ equal to the injection rate~$\rho$ in type $(\rho,\beta)$.
The average number of generated packets through an execution of actions of a randomized adversary of type $(\rho,\beta)$ is less than $\rho$.
This is because the expectation of the random variable $X$ is the second step of generation is exactly~$\rho$, but the  number of generated packets is restrained by the bucket's capacity, as reflected in the third step.


\begin{proposition}
\label{pro:simulating-adversary}

Each sequence of numbers of generated packets during a time segment $[t_1,t_2]$, for $0\le t_1<t_2$, that is consistent with the general adversary of type $(\rho,\beta)$, can be generated with a positive probability by the randomized adversary of type $(\rho,\beta)$. 
\end{proposition}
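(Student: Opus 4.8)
The plan is to combine Proposition~\ref{pro:one} with the fact that a Poisson random variable puts positive mass on every nonnegative integer. First I would pass from the general adversary to the four-step bucket process. A finite sequence of per-round packet counts that is consistent with the general adversary of type $(\rho,\beta)$ extends, for instance by padding with empty rounds outside $[t_1,t_2]$, to an unbounded execution of such an adversary (one checks that zero-padding preserves the bound $\rho\cdot|\tau|+\beta$ on every interval $\tau$), and by Proposition~\ref{pro:one} this unbounded execution is an execution of the four-step bucket process of type $(\rho,\beta)$. Let $j_0,j_1,j_2,\dots$ be the numbers of packets generated in its successive rounds and let $\beta=D_0,D_1,D_2,\dots$ be the associated bucket values. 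Since the third step guarantees $j_t\le\lfloor D_t\rfloor$, the choice $X=j_t$ in the second step is always legal and yields $\min\{\lfloor D_t\rfloor,j_t\}=j_t$; hence, without loss of generality, the witnessing four-step execution uses $X_t=j_t$ at every round.

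Next I would compare this deterministic execution with a run of the randomized adversary of type $(\rho,\beta)$, which performs exactly the same four steps except that in the second step it draws $X$ from the Poisson distribution with parameter~$\rho$, independently across rounds. Let $\cE$ be the event that in each of the rounds $0,1,\dots,t_2$ the randomized adversary draws precisely $X_t=j_t$. Because $\Pr(X=j_t)=e^{-\rho}\rho^{j_t}/j_t!>0$ for every $j_t\in\mZ_{\ge 0}$ and the draws are independent, $\cE$ has probability $\prod_{t=0}^{t_2}e^{-\rho}\rho^{j_t}/j_t!>0$, a finite product of positive numbers.

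It then remains to check that on $\cE$ the randomized adversary generates the prescribed counts. This is a straightforward induction on~$t$: both processes start with the bucket equal to~$\beta$; if the randomized adversary enters round~$t$ with bucket value $D_t$, then on $\cE$ its second step yields $X_t=j_t$, its third step generates $\min\{\lfloor D_t\rfloor,j_t\}=j_t$ packets, and its fourth step leaves the bucket at $D_t-j_t=D_{t+1}$. In particular the randomized adversary generates $j_t$ packets for every round $t\in[t_1,t_2]$, so it reproduces the given sequence, and it does so with the positive probability~$\Pr(\cE)$.

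I do not expect a serious obstacle here; the only point requiring a little care is the reduction in the first step — namely that a finite segment which is consistent with the general adversary is the restriction of a genuine unbounded execution, so that the bucket is correctly initialized at round~$0$ and Proposition~\ref{pro:one} applies, and that the step-two choices supplied implicitly by that proposition can be taken to be $X_t=j_t$. Everything else is the elementary full-support property of the Poisson law together with the deterministic nature of the bucket dynamics once the values $X_t$ are fixed.
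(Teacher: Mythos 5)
Your proof is correct and takes essentially the same route as the paper's: reduce to the four-step bucket process via Proposition~\ref{pro:one}, then use the fact that the Poisson variable assigns positive probability to every admissible value of $X$ in each round. The paper's version is a two-sentence sketch, and you have simply filled in the details it leaves implicit (zero-padding so that Proposition~\ref{pro:one} applies, taking $X_t=j_t$, independence over finitely many rounds, and the induction on bucket values).
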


\begin{proof}
Proposition~\ref{pro:one} states that consistence with an adversary of a type $(\rho,\beta)$ is equivalent to the four-step bucket process  of type $(\rho,\beta)$, so we can consider the corresponding bucket process.
For each value~$D$ such that $0\le D\le \beta$, and for each integer $j$ such that $0\le j\le D$, the probability of generating $j$ packets in a given round is positive, as determined by the second step of the generation.
\end{proof}

\Paragraph{Individual injection rates.}

The adversaries with individual injection rates are defined for a perpetual channel with a given number~$n$ of stations attached to it.
The stations are identified by their names in the interval $[0,n-1]$.
The adversarial model is a specialization of the general leaky-bucket adversaries,  so the adversary is of some leaky-bucket adversary's type $(\rho,\beta)$.

On top of the general leaky-bucket constraint, there are additional restrictions on injecting packets into individual stations.
Namely, each station $i$ has its \emph{individual injection rate~$\rho_i$} assigned to it, such that $0\le \rho_i\le 1$.
All these injection rates satisfy $\sum_{i=0}^{n-1} \rho_i = \rho$.
For each station $i$ and a contiguous time interval~$\tau$ of $|\tau|$ rounds, the adversary can inject at most $\rho_i\cdot |\tau| + \beta$ packets into the station~$i$ during the rounds in~$\tau$.
These packets are chosen from among all the packets that can be generated during~$\tau$ contiguous rounds according to the general leaky-bucket constraint; there are at most $\rho\cdot |\tau| + \beta$ such packets.

\section{Review of Broadcast Algorithms}

\label{sec:broadcast-algorithms}

There are two general paradigms to structure  deterministic broadcast in a distributed manner.
Algorithms designed specifically for a perpetual multiple-access channel with a fixed set of nodes attached to the channel, each with a unique name, could  operate by having the nodes exchange a token.
The token visiting a node allows the node to transmit, which prevents collisions.
Such algorithms could be called \emph{token} ones; see Chlebus et al.~\cite{ChlebusKR-TALG12,ChlebusKR-DC09} and Anantharamu et al.~\cite{AnantharamuCKR-JCSS19}.
The other paradigm is for ad-hoc channels, where nodes are dynamically generated.
The idea is to assign temporary implicit names: this works for a setting in which at most one new node is added to the system in a round.
Such algorithms could be called \emph{ad-hoc} ones; see Anantharamu et al.~\cite{AnantharamuC15}.

A broadcast algorithm is \emph{universal} if it is stable for all injection rates  less than~$1$,  and it is \emph{strongly universal} if it is universal and there exists a range $[0,c]$, for $0<c<1$, such that if an injection rate $\rho$ satisfies $0<\rho<c$ then the number of packets in the queues is a function of only the type of the adversary $(\rho,\beta)$, rather than the number of stations~$n$ in a perpetual channel.
A broadcast algorithm has \emph{universal latency} if it has bounded  latency for each injection rate smaller than~$1$.
A broadcast algorithm  has \emph{strongly universal latency} if it has universal latency and there exists a range $[0,c]$, for $0<c<1$, such that if injection rate $\rho$ satisfies $0<\rho<c$ then packet delay is a function of only the type of the adversary~$(\rho,\beta)$.
The \emph{throughput}  is the maximum injection rate for which an algorithm is stable, if such an injection rate exists.

Broadcasting by deterministic algorithms can be more effective on perpetual multiple-access channels than on channels executing ad-hoc algorithms, due to the stabilizing effect of a fixed set of stations.
A perpetual channel can achieve throughput~$1$, as shown in Chlebus et al.~\cite{ChlebusKR-DC09}.
There are algorithms for perpetual channels with universal packet latency, as shown in Anantharamu et al.~\cite{AnantharamuCKR-JCSS19}. 
In contrast to this, ad-hoc algorithms cannot handle injection rates that are at least $\frac{3}{4}$ with bounded  latency, as shown in Anantharamu and Chlebus~\cite{AnantharamuC15}, which implies that there are no universal algorithms for ad-hoc channels.

We consider three groups of algorithms: deterministic token ones, deterministic ad-hoc ones, and randomized ones.
In what follows, we give an overview of  these algorithms.

\Paragraph{Algorithms for perpetual channels.}

Algorithm \textsc{Round-Robin-Withholding} (\textsc{RRW}) is a plain-packet  full-sensing algorithm for channels without collision detection.
Algorithm \textsc{Search-Round-Robin} (\textsc{SRR}) is a  plain-packet full-sensing algorithm  for channels with collision detection.
In each of these algorithms, the stations share a virtual token, which is circulated among the stations, with names in the interval $[0,n-1]$ ordered into a cycle, in a round-robin manner.
A station with the token withholds the channel to transmit all its packets one by one.
A silent round is interpreted as a signal for a next station to take over.
The station that takes over is determined differently in each of the algorithms.
In algorithm RRW, the station that takes over transmitting is the next station  in the circular order after the one  holding the token.
In algorithm SRR, the station that takes over transmitting is the next station  in the circular order after the one  holding the token that has a packet pending transmission; such a station is determined by binary search using collision detection. 
The algorithms RRW and SRR were proposed in Chlebus et al.~\cite{ChlebusKR-TALG12} and shown to have universal latency.

Each of the algorithms \textsc{RRW} and \textsc{SRR} can be modified by using the approach called ``old-go-first,'' see Anantharamu et al.~\cite{AnantharamuCKR-JCSS19}.
This approach works as follows. 
A \emph{phase} consists of a segment of rounds in which the token makes a full cycle and returns to the starting point.
The packets that are injected in a phase are considered ``new'' during the phase and become ``old'' when the next phase starts.
In the course of a phase, the new packets are ignored and only the old ones are broadcast.
The algorithms modified this way are called  \textsc{Old-First-Round-Robin-Withholding} (\textsc{OF-RRW}) and \textsc{Old-First-Search-Round-Robin} (\textsc{OF-SRR}), respectively.

Algorithm \textsc{Move-Big-To-Front} (\textsc{MBTF}) is an adaptive full-sensing algorithm for channels without collision detection.
The stations have their names ordered into a virtual list, beginning with the ordering determined by their names. 
A station is considered big if it has at least $n$ packets in its queue.
There is a virtual token that traverses the list in a round-robin manner.
If a big station receives the token then it is moved to the beginning fo the list and continues transmitting as long as it is big.
If a station that is not big receives a token then it transmits only one packet, assuming it has any, and otherwise pauses in the round.
Algorithm MBTF was proposed in Chlebus et al.~\cite{ChlebusKR-DC09} and shown to attain throughput~$1$ and have universal latency.

\Paragraph{Algorithms for $1$-activating ad-hoc channels.}

We present briefly deterministic distributed algorithms designed for $1$-activating ad-hoc channels.
The ad-hoc model and the broadcast algorithm for this model were given in Anantharamu and Chlebus~\cite{AnantharamuC15}.

Algorithm \textsc{Counting-Backoff}  is a plain-packet activation-based algorithm for ad-hoc channels with collision detection.
The active nodes are stored on a virtual distributed stack.
The stack is implemented such that an active station remembers its distance from the top of the stack; the station on top of the stack is of distance zero for the top.
A station that is on top of the stack keeps transmitting packets as long as its queue is nonempty.
A station activated in a round considers itself to be pushed on the stack by the activation and so transmits in the next round.
If a station transmits in a round that produces a collision then the station considers itself to be second from the top of the stack, since the collision is with a newly activated station that has just got pushed on the stack.
In general, a collision makes all the stations on the stack increase their distance to the top of the stack by one, and a silent round makes all the stations on the stack decrease their distance to the top of the stack by one.

In an execution of algorithm \textsc{Quadruple-Round}, time is partitioned into segments of length four, which are considered consecutively one by one.
For a processed time segment, the stations activated in one of the four rounds in the segment have an opportunity to broadcast their packets.
The existence of such stations is verified in a binary search manner, with the four rounds associated with leaves.
In the first round of searching, each such a station is to broadcast its packet.
If this first round results in silence, then this means that no station was activated in the given time segment and the algorithm advances to the next segment; otherwise, if collision is detected, then the binary search branches off to consider the left and right subtrees.
If a packet is heard, then the transmitting station does not withhold the channel, instead, it will have an opportunity to transmit again in the next iteration of the binary search.

The active stations executing algorithm \textsc{Queue-Backoff} are organized as a distributed first-in-first-out queue, which they join in the order of their activation times. 
A station at the front of the queue transmits its packets, and the last packet has an ``over'' bit attached, used to prompt the next station to start its transmissions.
Additionally, each message carrying a packet includes control bits describing the current size of the queue.
A newly activated station transmits immediately, which results in a collision, unless the queue is empty.
If a collision occurs, the front station learns that the queue has just increased, and this count is reflected in the messages the station transmits.
A station that joins the queue identifies its position by the size of the queue when it learns it for the first time, from which it subtracts the number of immediately preceding collisions created by other stations joining the queue.

These three algorithms for ad-hoc channels presented above do not use the names of nodes, even if they are available. 
They have ranges of injection rates with bounded latency  even when there is no fixed set of nodes attached to the channel and the adversary has the power to ``create'' new stations by injecting packets into them, but at most one new station in a round; see Anantharamu and Chlebus~\cite{AnantharamuC15}.

\Paragraph{Randomized backoff algorithms.}

Now we discuss two backoff algorithms, see H{\aa }stad et al.~\cite{HastadLR-SICOMP96} for a discussion of backoff algorithms.
These are the \textsc{Binary-Exponential-Backoff} (\textsc{BEB}) and \textsc{Quadratic-Backoff} (\textsc{QB}).
Each of them is randomized and acknowledgement-based.
As soon as a packet is made heard on the channel, the next available packet is processed immediately. 
A node processes its packets in the order of injection.
The backoff algorithms are considered in their windowed versions. 
A \emph{window} is a contiguous segment of rounds starting from the round of a successful transmission or a collision.
The lengths of windows may vary; the window just after a successful transmission is of size~$1$.
When a new packet is available for processing, the node selects a round uniformly at random in the current window, then waits for this round to occur and transmits the packet.
Since the first window consists of one round, a new packet is  transmitted immediately.

The backoff algorithms differ among themselves in how the sizes of consecutive windows  increase.
For \textsc{Binary-Exponential-Backoff}, the $i$th window size is determined as~$2^{i}$, and  for \textsc{Quadratic-Backoff}, the $i$th window size is defined to be~$i^2$.

The two randomized algorithms can be considered with an upper bound on the window size, as the binary exponential backoff is used in the implementation of the Ethernet.
For instance,  the $i$th window size for \textsc{Binary-Exponential-Backoff} could be $2^{\min(10, i) }$, which is exactly as in the Ethernet, and for \textsc{Quadratic-Backoff}, the $i$th window size could be $(\min(i,32))^2$.
Observe that, with these two choices of constants, the maximum size of a window is the same in  \textsc{BEB} and~\textsc{QB}, since $2^5=32$.

\Paragraph{Knowledge of algorithms.}

A communication system we consider consists of a multiple access channel and an adversary generating and injecting packets.
We say that an algorithm \emph{knows} a parameter of such a communication system if this parameter can be used as part of code.
The algorithms we consider never know the adversary controlling packet injection.
The algorithms designed for perpetual channel are token based, and manipulating a token requires the knowledge of the number of stations~$n$.
The algorithms for ad-hoc channels do not know the number of stations by design, since ad-hoc channels do not have a fixed set of stations; the lack of this knowledge extends to the case when such algorithms are executed on a perpetual channel with a fixed set of stations.

\section{Upper Bounds on Queues and Latency}

\label{sec:worst-case-bounds}

We investigate the worst-case upper bounds on the latency of adversarial broadcasting in two special cases.
One is for algorithms designed for a perpetual channel with named stations when leaky bucket adversaries are restricted to have individual injection rates.
The effect of individual injection rates, as compared to the general leaky-bucket adversarial model, is that the old-go-first versions of algorithms have greater bounds on packet latency rather than smaller, when compared to the regular versions.
The other is for algorithms designed for ad-hoc channels when they are executed on perpetual channels.

Deterministic algorithms for $1$-activating ad-hoc channels were given in Anantharamu and Chlebus~\cite{AnantharamuC15} that can handle injection rates up to $\frac{1}{2}$ in a stable manner.
A fixed set of nodes of a perpetual channel may  have a stabilizing effect on these algorithms.
Indeed, we show that a particular algorithm designed for ad-hoc channels has  \emph{strong} universal latency when executed on $1$-activating perpetual channels.

\subsection{Old-Go-First with individual injection rates}

\label{sec:individual}

We begin with token algorithms designed for perpetual channels.
It was shown in Anantharamu et al.~\cite{AnantharamuCKR-JCSS19} that algorithm \textsc{OF-RRW}  executed  by $n$ stations  against the adversary of type~$(\rho,\beta)$ has the following asymptotically-tight performance bounds: the number of packets simultaneously queued in the stations is at most $\frac{2\rho }{1-\rho}\cdot n +\beta$ and the packet latency is at most $\frac{2}{1-\rho}\cdot n +\beta(1+\rho)$.
In contrast to that, algorithm \textsc{RRW} has he following asymptotically-tight performance bounds: the number of packets simultaneously queued in the stations is at most $\frac{2\rho }{1-\rho}\cdot n +\beta$ and the packet latency is at most $\frac{2-\rho}{(1-\rho)^2} \cdot n+ \frac{\beta}{1-\rho}$.
It follows that algorithm  \textsc{OF-RRW} has superior worst-case performance bounds as compared to algorithm  \textsc{RRW}, better by a factor $1/(1-\rho)$, which grows unbounded as the injection rate~$\rho$ increases towards~$1$.

Next, we show that if an adversary is restricted by individual injection rates then this phenomenon does not hold.


\begin{theorem}
\label{thm:round-rw}
Algorithm  \textsc{RRW} has at most $\frac{\rho  }{1-\rho}\,n+\beta$ packets queued and its latency is at most $\frac{2-\rho}{1-\rho}\,n+ \beta$ when executed on a perpetual channel with $n$ stations against the $(\rho,\beta)$ adversary with individual injection rates.
\end{theorem}

\begin{proof}
We consider an execution in which the adversary injects at full power, with the effect of burstiness considered separately.
As the phases pass, their length increases approaching a limit, with possible fluctuations.
If the adversary is injecting at full power into a particular station, then the numbers of rounds between two consecutive injections differ by at most~$1$, due to rounding.
If this occurs for all the stations in the same phase then this phase may be extended by $n$ rounds.
Burstiness can make queues increase by $\beta$ in one phase and contribute $\beta$ to the length of the phase. 
Let $p$ denote the number of packets in a phase in an equilibrium state, while disregarding burstiness: the number of packets at the start of a phase is the same as after the phase is over.
Such a phase takes $n+p$ rounds, and the number of packets injected during the phase is $\rho(n+p)$.
This gives the equation $p=\rho(n+p)$,  which can be solved for $p$ to give $p=\frac{\rho }{1-\rho}\, n$.
The number of packets in a phase can be increased by injecting up to $\beta$ packets extra packets, for a total of $\frac{\rho }{1-\rho}\, n+\beta$.

A phase consists of the rounds spent on transmitting packets, then $n$ silent rounds that result in the token advancing through the stations, and possibly $n$ extra rounds due to individual fluctuations of the number of packets injected individually in each station.
This bound is therefore as follows: $2n +\frac{\rho }{1-\rho}\, n+\beta$,
which can be simplified to the claimed form by algebra.
\end{proof}

Next, we discuss algorithms SRR and OF-SRR.
Algorithm \textsc{OF-SRR}  executed  by $n$ stations  against an adversary of type~$(\rho,\beta)$ has the following asymptotically-tight performance bounds: the number of packets simultaneously queued in the stations is at most $\frac{4\rho }{1-\rho}\cdot n  +\beta$ and packet latency is at most $\frac{4}{1-\rho}\cdot n +\beta(1+\rho)$.
In contrast to that, algorithm \textsc{SRR} has he following asymptotically-tight performance bounds: the number of packets simultaneously queued in the stations is at most $\frac{4\rho }{1-\rho}\cdot n +\beta$ and packet latency is at most $\frac{4-2\rho}{(1-\rho)^2}\cdot n +\frac{\beta}{1-\rho}$; see Anantharamu et al.~\cite{AnantharamuCKR-JCSS19}.

Algorithms \textsc{SRR} and \textsc{OF-SRR} have greater upper bounds on packet latency than \textsc{RRW} and \textsc{OF-RRW}, respectively.
Algorithms \textsc{SRR} and \textsc{OF-SRR} have a property that their packet latency becomes $\cO(\log n)$ for suitably small injection rates that are $\cO(1/\log n)$, see Anantharamu et al.~\cite{AnantharamuCKR-JCSS19}.
Algorithm  \textsc{OF-SRR} has the superior worst-case performance bounds  compared to algorithm  \textsc{SRR} by a factor $1/(1-\rho)$, which grows unbounded  if $\rho$ converges to~$1$.
If an adversary is restricted by individual injection rates, then this phenomenon does not occur, as stated next.


\begin{theorem}
\label{thm:search-rw}
Algorithm  \textsc{SRR} has at most $\frac{2\rho  }{1-\rho}\,n+\beta$ packets queued in a round and its latency is at most $\frac{3-\rho}{1-\rho}\,n+ \beta$ if executed on a perpetual channel with $n$ stations against the $(\rho,\beta)$ adversary with individual injection rates.
\end{theorem}

\begin{proof}
We denote by $p$ the number of packets in a phase in equilibrium state, while disregarding burstiness: the number of packets at the start of a phase is the same as after the phase is over.
There may be up to $2n-1$ void rounds in a phase, so a phase takes at most $2n+p$ rounds, while the number of packets injected during the phase is $\rho(2n+p)$.
This gives the equation $p=\rho(2n+p)$, which solved for $p$ gives $p=\frac{2\rho }{1-\rho}\, n$.
The number of packets in a phase can be increased by injecting up to~$\beta$ packets extra packets, for a total of $\frac{2\rho }{1-\rho}\, n+\beta$.

A phase consists of the rounds spent on transmitting packets, then $2n-1$ silent rounds that result in the token advancing through the stations, and possibly $n$ extra rounds due to individual fluctuations of the number of packets injected individually in each station.
This bound is therefore as follows: $3n +\frac{2\rho }{1-\rho}\, n+\beta$, which can be simplified to the claimed form by algebra.
\end{proof}

\subsection{Ad-hoc algorithms in perpetual channels}

We will consider algorithms designed for ad-hoc channel when executed on a perpetual channel with a fixed set of stations against $1$-activating adversaries.

Algorithm \textsc{Counting-Backoff} has packet latency $\frac{3\beta-3}{1-3\rho}$ in ad-hoc $1$-activating channels, for injection rates $\rho$ that are less than~$\frac{1}{3}$; see Anantharamu et al.~\cite{AnantharamuC15}.
This algorithm is not stable for injection rates at least $\frac{1}{3}$, since the stack may stay forever nonempty in an execution, and the station at the bottom is starved for access to the channel.
This property holds in both ad-hoc and perpetual channels.

Algorithm \textsc{Queue-Backoff} is an adaptive activation-based algorithm for ad-hoc  channels without collision detection that has packet latency $4\beta-4$ against $1$-activating adversaries with injection rate~$\frac{1}{2}$, see Anantharamu and Chlebus~\cite{AnantharamuC15}.
Next we show that the algorithm has universal packet latency on perpetual channels against $1$-activating adversaries.
It follows that algorithm \textsc{Queue-Backoff} has strong universal latency in perpetual channels  against $1$-activating adversaries.


\begin{theorem}
Algorithm \textsc{Queue-Backoff} is an adaptive algorithm with strongly-universal packet latency  if  executed on a $1$-activating perpetual channel with collision detection.
It has queues at most $\frac{\rho}{1-\rho}\cdot n+\beta $ and latency at most $\frac{\rho}{(1-\rho)^2}\cdot n +\frac{\beta}{1-\rho} $  against $(\rho,\beta)$  adversaries  with $\rho<1$ an $n$ stations attached to such a channel.
\end{theorem}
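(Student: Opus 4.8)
The plan is to follow the two‑phase scheme of the preceding proof (for \textsc{Quadruple-Round}): first describe a worst‑case execution in which the adversary inflates the combined queue as much as it can, then inject one dedicated packet and bound its delay by a geometric‑series argument; burstiness is handled separately and produces the additive $\beta$‑terms, as in the proof of Theorem~\ref{thm:round-rw}.

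The one genuinely algorithm‑specific step is a structural claim about \textsc{Queue-Backoff} on a $1$‑activating channel: whenever some station is active at the start of a round, that round is either a successful transmission or a collision, and each collision can be charged to a distinct activation of a station whose channel‑queue was nonempty. This is where the design of the algorithm enters — the front station of the distributed FIFO queue always has a packet to send and the `over' bit is piggybacked on its last packet, so the hand‑off to the next station wastes no silent round; a newly activated station transmits at once, which collides with the ongoing front transmission unless the queue is empty. Since the channel is $1$‑activating there is at most one such collision per round, and over any contiguous interval $\tau$ the number of activations is at most the number of generated packets, hence at most $\rho|\tau|+\beta$.

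Granting this, the queue bound is obtained as in Theorem~\ref{thm:round-rw}. Call a phase a segment during which $n$ stations become active; it contains $n$ collision rounds and, in the equilibrium where the combined queue returns to its initial value, exactly $p$ successful rounds, so it has length $p+n$ and the adversary injects $\rho(p+n)$ packets during it. Solving $p=\rho(p+n)$ gives $p=\frac{\rho}{1-\rho}n$, and one additional burst gives $m=\frac{\rho}{1-\rho}n+\beta$. For latency, once the combined queue has reached $m$ the adversary activates the dedicated packet's station (one collision, absorbed into the $\beta$‑term) and thereafter injects at full power only into stations that precede it in FIFO order and are still active; such injections enlarge the backlog ahead of the dedicated packet without causing a further collision, so from then on each round serves exactly one packet. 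If $L$ is the resulting delay, the $L$ packets served while the dedicated packet waits are the at most $m$ packets ahead of it plus the $\rho L$ packets injected ahead of it during the wait, so $L\le m+\rho L$, i.e. $L\le\frac{m}{1-\rho}=\frac{\rho}{(1-\rho)^2}n+\frac{\beta}{1-\rho}$ — equivalently the geometric sum $m(1+\rho+\rho^2+\cdots)$.

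It remains to see that the latency is strongly universal. The collision count shows that over any interval $\tau$ the algorithm serves at least $|\tau|-(\rho|\tau|+\beta)$ packets while at most $\rho|\tau|+\beta$ arrive, so for $\rho<\tfrac12$ the combined queue has negative drift and never exceeds a bound depending only on $\rho$ and $\beta$; rerunning the latency estimate then yields a packet‑delay bound that is again a function of $(\rho,\beta)$ only, independent of $n$, so $c=\tfrac12$ witnesses strong universality. The main obstacle is precisely the structural claim in the second paragraph: one must verify that a queue hand‑off never costs a silent round and that the adversary cannot manufacture more than one collision per activation, for once that is pinned down the equilibrium equation and the geometric series are routine.
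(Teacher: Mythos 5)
Your proposal is correct and follows essentially the same route as the paper's proof: a two-part worst-case execution, a phase of $n$ collision rounds plus $p$ successful rounds leading to the equilibrium equation $p=\rho(n+p)$, and the geometric-series bound $m(1+\rho+\rho^2+\cdots)=m/(1-\rho)$ for the dedicated packet's delay. You additionally spell out the one-collision-per-activation accounting and the negative-drift argument for $\rho<\tfrac12$ that justifies the ``strongly universal'' claim, both of which the paper leaves implicit.
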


\begin{proof}
We investigate how  the adversary can organize an execution to maximize the number of packets and the delay of some packet.
A specific execution will consist of two parts.
In the first part, the adversary will work to make all the active stations store as many packets in total as possible.
This is accomplished by activating a station whenever there is an opportunity to do this and continuously injecting into this station at full power until a next station is activated.
After the goal of the first part is accomplished, a station is activated with a new dedicated packet that we want to delay as much as possible.
This is accomplished by injecting at full power only into stations that are in front of the dedicated packet's station in the distributed queue.

The first part begins by the queue growing to encompass all the available $n$ stations.
This is feasible because injecting a packet at every other round is sufficient to maintain the current size of the queue, in the sense that immediately afer a station leaves the queue at least one new station is activated.
We define a phase to be a contiguous segment of rounds during which  $n$ stations become active. 
A phase includes the $n$ collision rounds and the rest consists of the rounds with packets heard on the channel. 

Let $p$ be the maximum number of packets stored in all the active stations while disregarding the effect of burstiness, so that the ultimate maximum number of packets $m$ is $m=p+\beta$.
As the first part of the execution progresses, the number of packets approaches the solution of the equation $p=\rho (n+p)$.
This equation can be solved for $p$ to yield $m=\frac{\rho}{1-\rho}\cdot n+\beta$.

Once the first part ends, a packet is injected into a newly activated station with the goal to delay its successful transmission by injecting at full power in the stations preceding the packet's one.
This results in the following number of rounds spent by the packet waiting:
\[
m(1+\rho + \rho^2+\cdots) =\frac{m}{1-\rho}
\ , 
\]
which is equivalent to the claimed form by algebra.
\end{proof}

Algorithm \textsc{Queue-Backoff} is adaptive, in that it uses control bits in messages. 
We conjecture that this is a necessary property of algorithms with strongly universal latency.

\begin{conjecture}
There is no plain-packet algorithm that has strongly universal latency when executed on $1$-activating perpetual channels. 
\end{conjecture}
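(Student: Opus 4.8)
I would argue by contradiction. Suppose some plain-packet algorithm $A$ had strongly universal latency, with witnessing constant $c\in(0,1)$ and latency function $L(\rho,\beta)$; fix any $\rho\in(0,c)$ and $\beta=2$, write $L:=L(\rho,\beta)$, and leave the number of stations $n$ to be chosen large at the end. The structural fact driving everything is that $A$ is \emph{plain packet}: the feedback at a round is determined solely by the set of stations that transmit, and a station with an empty queue never transmits. Hence an activation is invisible to every other station, and the histories ``station $v$ is passive'' and ``station $v$ is active but has not yet transmitted'' are indistinguishable to every station $w\neq v$ and to the channel. A first consequence is that we may preface everything by an arbitrarily long silent interval and then inject only a bounded number of packets; the resulting injection rate over every window then stays below $\rho$ automatically, so the rate constraint never binds, and all of the $n$-dependence that we produce will come purely from $A$'s inability to tell who is transmitting.

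The second step is to force a collision between two stations that are, from the viewpoint of $A$, drawn from a large pool of candidates. Because $A$ is plain packet, a station cannot learn that it is the lowest-named active station without transmitting; hence whenever two stations are simultaneously active and each is ``ready'', both transmit and collide. I would show the adversary can create such a situation at vanishing rate, e.g.\ by activating station $u$ and then, one round later, station $v$, arranging via the choice of $u$, $v$ and the activation times that their first transmissions fall in the same round $r$, so that they collide there while the channel has been silent up to round $r-1$ and hence both stations have the exact ``lone'' history up to $r$. The key point is a pigeonhole count: the map sending a (station, activation-round) pair to the round of that station's first lone transmission takes values in a window of only $O(L)$ rounds, whereas there are $\Theta(nL)$ such pairs among the high-named stations; so there is a round $r$ reachable, as a collision round, from a family $\cC$ of $\Omega(n/L)$ distinct candidate colliders, with the property that activating any two members of $\cC$ at their designated times produces a collision at $r$ and the \emph{same} channel transcript at every station outside the two colliders, up to and including round $r$.

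The heart of the argument is the post-collision phase. After the collision at round $r$, algorithm $A$ must, for \emph{every} pair chosen from $\cC$, eventually schedule a round at which exactly one of the two colliders transmits, so that its packet is heard. Since the only signalling primitive available to a station is to transmit its single packet, this is an instance of selection among $\Omega(n/L)$ mutually indistinguishable candidates using ternary channel feedback, which requires $\Omega(\log(n/L))$ rounds: a transcript-counting argument shows that if every pair of $\cC$ had one of its two packets heard within $k$ rounds of $r$, then the at most $3^{k}$ possible $k$-round feedback transcripts would have to drive $A$ into correct, pair-specific behaviour for all $\binom{|\cC|}{2}$ pairs, which is impossible once $k=o(\log(n/L))$. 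Consequently, for a suitable pair the later-heard of the two colliding packets has delay at least $\Omega(\log(n/L))$. Choosing $n$ so large that $\Omega(\log(n/L))>L(\rho,\beta)$ then contradicts strongly universal latency.

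I expect the main obstacle to be the second step, and specifically forcing the ``confusable'' collision while respecting the $1$-activation constraint. Because only one station may be activated per round, the two colliders necessarily have local clocks offset by one round, and a clever algorithm might try to exploit this asymmetry to separate them in $O(1)$ rounds---for instance by scheduling transmissions by age-since-activation. Excluding this seems to need either a sharper, more global count over (station, time) pairs that yields two colliders of \emph{equal} age, or a dichotomy: if no bounded-size $1$-activating input ever produces a collision, then the transmissions of distinct stations occupy essentially disjoint absolute-time slots, which forces $A$ into round-robin-like behaviour and hence $\Omega(n)$ latency already on a single-packet input---again an $n$-dependence. A secondary subtlety, relevant because a full-sensing station keeps listening while passive, is that a collider's post-collision behaviour may depend on when it was activated; this is why the pigeonhole above must also pin down activation offsets, which costs only a constant factor in $|\cC|$ and hence only an additive $O(\log L)$ in the final bound.
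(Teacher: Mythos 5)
This statement is left open in the paper --- it is stated as a conjecture with no proof --- so there is nothing to compare your argument against; I can only assess it on its own terms. Your overall plan is the natural one: use the plain-packet property to make activations mutually invisible, force a collision between two stations whose pre-collision histories coincide with their ``lone'' histories, and then invoke a $2$-out-of-$N$ conflict-resolution lower bound ($\Omega(\log |\cC|)$ rounds against ternary feedback, via the standard live-set bisection) to beat any $n$-independent latency bound $L(\rho,\beta)$. The third step is essentially sound, and the indistinguishability observation in the first step is correct and does exploit exactly the plain-packet and perpetual-channel hypotheses.

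The genuine gap is the one you flag yourself, and it is worse than you suggest. Your pigeonhole gives $\Omega(n/L)$ pairs $(v,t)$ whose first lone transmission lands at a common round $r$, but it does not give two such pairs with \emph{distinct stations and distinct activation times}, which is what the $1$-activation constraint forces you to have. The algorithm ``transmit immediately upon activation,'' i.e.\ $f(v,t)=t+1$ for every station $v$, realizes exactly the degenerate case: every preimage of a round $r$ consists of all $n$ stations with the \emph{same} activation time $r-1$, so no confusable first-transmission collision can ever be staged. Your fallback dichotomy --- ``if no collision can be forced then transmissions occupy disjoint slots, hence round-robin-like behaviour, hence $\Omega(n)$ latency on a single packet'' --- fails on this same example, which has single-packet latency $1$. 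To rescue the argument you would have to move to multi-packet histories (e.g.\ exploit $\beta\ge 2$ to collide a second transmission of one station with a first transmission of another), at which point the clean ``lone history'' indistinguishability no longer applies verbatim and the candidate-counting has to be redone over richer injection patterns. Until that branch of the dichotomy is handled, the proof is incomplete; the secondary legality issue in the bisection (the surviving pair must have distinct names \emph{and} distinct activation offsets) is, as you say, fixable at the cost of constants, but only once step two is in place.
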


Algorithm \textsc{Quadruple-Round} is a full-sensing plain-packet algorithm for ad-hoc  $1$-activating  channels with collision detection that has packet latency $2\beta+4$ and queues $\beta+\cO(1)$ against $(\frac{3}{8},\beta)$ adversaries, see Anantharamu and Chlebus~\cite{AnantharamuC15}.
These bounds hold a fortiori for perpetual channels.

Next, we show that the algorithm attains bounded latency for a larger range of injection rates   in perpetual channels compared to ad-hoc ones.


\begin{theorem}
Algorithm \textsc{Quadruple-Round}  has queues at most $ \frac{\rho}{3-7\rho}\cdot n+\beta$ and latency at most $ \frac{7\rho}{(3-7\rho)^2} \cdot n +\frac{n+7\beta}{3-7\rho}$, if executed on a $1$-activating perpetual channel with collision detection  against $(\rho,\beta)$  adversaries with $\rho<\frac{3}{7}$ and $n$ stations attached to a channel.
\end{theorem}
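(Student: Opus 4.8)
The plan is to mimic the proofs of Theorems~\ref{thm:round-rw} and~\ref{thm:search-rw}: take a worst-case execution in which the adversary injects at full power, treat burstiness as an additive~$\beta$ correction, isolate a recurring \emph{phase}, write a balance equation that equates the packets produced during a phase with the number present in equilibrium, solve it for the queue bound, and read the latency off the length of a phase.

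First I would fix the combinatorial picture peculiar to running an ad-hoc algorithm on a perpetual channel. The algorithm processes the length-four segments of the timeline one at a time; since the adversary is $1$-activating, each segment has at most four newly activated stations, one per slot, and the depth-two search tree attached to a segment has a $1+2+4=7$-node skeleton, so one sweep of the search over a segment costs at most seven probe rounds, at most three of which (the internal-node collisions) carry no packet. The new leverage, absent from the ad-hoc analysis of~\cite{AnantharamuC15}, is that only $n$ stations exist and a station activated in a segment stays active until that segment is processed; hence a station can be the activated occupant of at most one still-unprocessed segment, so at every round the set of non-empty segments waiting to be processed has size at most $n$, and the total search overhead carried in the backlog is $\cO(n)$ rather than a quantity growing with time.

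Next I would set up the balance equation. Let $p$ denote the number of packets present at the start of a phase in equilibrium, with burstiness disregarded, a phase being long enough that the backlog is the same at its start and end. Charging one probe round per segment for recognizing it is finished, at most $n$ extra probe rounds for the non-empty segments still queued, and the further (at most constant per packet) search rounds a queued packet incurs before it is delivered while the search is re-run, the four-round-segment structure together with the depth-two search should make a phase last at most $\tfrac{1}{3}\,(n+7p)$ rounds. At full power the adversary injects $\rho$ times the phase length, which in equilibrium equals $p$, so $3p=\rho\,(n+7p)$, giving $p=\tfrac{\rho}{3-7\rho}\,n$, which is finite exactly for $\rho<\tfrac{3}{7}$; adding the $\beta$ packets that burstiness can keep queued yields the claimed bound $\tfrac{\rho}{3-7\rho}\,n+\beta$ on the number of queued packets. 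For latency, a packet injected into a station waits at most until that station's segment reaches the front of the processing order and the station is served; bounding this by the processing work lying ahead of it in the backlog — at most $n$ from the non-empty segments plus up to seven rounds for each of the at most $\tfrac{\rho}{3-7\rho}\,n+\beta$ queued packets — and discounting by the drain factor $3-7\rho$ gives a delay of at most $\tfrac{\,n+7\left(\tfrac{\rho}{3-7\rho}\,n+\beta\right)}{3-7\rho}$, which equals $\tfrac{7\rho}{(3-7\rho)^2}\,n+\tfrac{n+7\beta}{3-7\rho}$ after routine algebra.

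The step I expect to be the real obstacle is the accounting in the third paragraph: determining exactly how many probe rounds of a segment's binary search are void once the search is re-run (a station that ``does not withhold'' reappears in the next sweep), and showing that the cap of $n$ non-empty backlogged segments converts into precisely the factor $3-7\rho$. The delicate case is a segment with two activated stations in adjacent slots, which has the worst ratio of collision probes to delivered packets; handling it loosely would only give $3-8\rho$, reproducing the ad-hoc threshold $\tfrac{3}{8}$ rather than the improved $\tfrac{3}{7}$, so the constants $3$ and $7$ have to be earned there.
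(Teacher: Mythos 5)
Your overall strategy is the same as the paper's: a full-power execution with burstiness split off as an additive $\beta$, a phase during which $n$ stations become active, an equilibrium balance equation $p=\rho\bigl(\tfrac{1}{3}n+\tfrac{7}{3}p\bigr)$ yielding $m=\tfrac{\rho}{3-7\rho}n+\beta$, and a geometric-series (``drain factor'') argument giving latency $\tfrac{7m+n}{3-7\rho}$. Your final formulas coincide with the paper's exactly. The problem is that the one step that actually carries the theorem --- justifying the coefficient $\tfrac{7}{3}$ rounds of processing per queued packet, i.e.\ the $3-7\rho$ denominator --- is not established, and your own combinatorial setup points the wrong way. You count a single length-four segment with up to four activated stations and a $1+2+4=7$-node search skeleton, which gives at most seven probe rounds for four delivered packets, a ratio of $\tfrac{7}{4}$. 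An upper bound on latency needs the \emph{worst} ratio of processing rounds to delivered packets over all admissible injection patterns, and $\tfrac{7}{4}$ is not it. The paper's accounting is different: the timeline is grouped into \emph{double} segments of eight rounds, and the adversary activates \emph{three} stations placed so that the algorithm spends seven rounds processing the double segment, plus one silent round per double segment to confirm it is clear; this is where $\tfrac{7}{3}p$ and $\tfrac{1}{3}n$ (there are $n/3$ double segments per phase) come from. You then simply write down the phase length $\tfrac{1}{3}(n+7p)$ that this pattern produces, without deriving it from your (incompatible) per-segment picture, and you concede in your closing paragraph that the constants $3$ and $7$ ``have to be earned.'' That concession is accurate: as written, the proposal would support a balance equation with ratio $\tfrac{7}{4}$ (too optimistic to be a valid upper bound) or, handled loosely as you note, $\tfrac{8}{3}$ (giving only the ad-hoc threshold $\tfrac{3}{8}$), but not the claimed $\tfrac{7}{3}$.

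A secondary, smaller point: your justification for why the backlog of unprocessed segments is $\cO(n)$ (each station occupies at most one pending segment) is a reasonable way to import the perpetual-channel advantage, and it is in the same spirit as the paper's observation that a phase activates exactly the $n$ available stations. But the latency step also needs the observation that after the buildup the adversary injects only into stations served \emph{before} the dedicated packet's station, which is what makes the residual work decay geometrically with ratio $\tfrac{7}{3}\rho$; your ``discounting by the drain factor $3-7\rho$'' presupposes this without stating the injection pattern that realizes it. Neither issue changes the final algebra, but both would need to be filled in --- the first one essentially requires reproducing the worst-case double-segment analysis of \textsc{Quadruple-Round} from Anantharamu and Chlebus, which is the content the theorem actually rests on.
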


\begin{proof}
We investigate how  the adversary can organize an execution to maximize the delay of some packet.
The discrete time line underlying an execution is partitioned into double segments of eight consecutive rounds.
It is possible to activate three stations in a segment with one packet per activated station such that algorithm  \textsc{Quadruple-Round} spends seven rounds processing the double segment; see Anantharamu and Chlebus~\cite{AnantharamuC15} for similar arguments. 
This can be iterated with more triples of packets injected into the same stations in the double segment.
Finally, one more silent round verifies that the double segment is clear.
This creates the worst-case time overhead per the number of injected packets, and is the scheme repeatedly used in the argument on the level of executions.

A specific execution we present consists of two parts.
In the first part, the adversary works to make all the active stations store as many packets in total as possible.
This is accomplished by activating three stations in a segment, whenever there is an opportunity to do this, and such that the algorithm needs seven rounds to process such triples of packets.
We keep injecting into these stations, three packets per the same stations,  when the adversarial model allows to inject three packets, or a multiple of three, and when activating three new stations is not possible because fewer than three stations are passive. 
After the goal of the first part is accomplished, a station is activated with a new dedicated packet that we want to delay as much as possible.
This is accomplished by injecting at full power only into stations that were activated prior to the dedicated packet's station.

The first part begins by the size of the set of acting stations growing with the goal to make all the available $n$ stations active.
This is feasible because injecting a packet at every other round is sufficient to maintain the current size of the queue, in the sense that a new station gets  activated immediately after a station leaves the queue.
We define a \emph{phase} to be a contiguous segment of rounds during which  $n$ stations become active. 
In particular, if all $n$ stations are active simultaneously and each of them holds one packet, then a phase takes $\frac{8}{3}n$ rounds.
This is because there are $\frac{n}{3}$ segments, with three stations activated in a segment, each taking $8$ rounds to clear.
Once the $n$ stations are active, injecting with rate $\frac{3}{8}$ is sufficient to maintain this state in the ad-hoc channel, and with rate less that $\frac{3}{7}$ in the perpetual one.

Let $p$ denote the maximum number of packets accrued in the first part, disregarding burstiness, so the true maximum number is $m=p+\beta$.
The number of double segments in a phase is $\frac{n}{3}$, and this is the number of silent rounds resulting in clearing the double segments.
When the first part of an execution is in an equilibrium state, the following equality  is satisfied: 
\[
p = \rho \cdot \Bigl(\frac{1}{3}n + \frac{7}{3} p\Bigr)
\ .
\]
It  can be solved for $p$ to give $m$: $m= \frac{\rho}{3-7\rho}\cdot n+\beta$.
The delay of a packet injected now can be extended to the following value
\[
\Bigl(\frac{7}{3} m +\frac{1}{3} n \Bigr) 
\Bigl(1+\frac{7}{3}\rho + \Bigl(\frac{7}{3}\rho\Bigr)^2+\cdots \Bigr)
=
\frac{7m+n}{3-7\rho}
\ .
\] 
The claimed bound is obtained from this by algebra.
\end{proof}

\section{Simulations of Randomized Adversaries}

\label{sec:simulations-randomized-adversaries}

We present outcomes of experiments carried out as simulations of randomized adversaries.
The adversaries are specified by bucket processes, as presented in Section~\ref{sec:adversarial-packet-injection}.
A bucket process determines the number of packets generated in a step but does not determine the stations into which the packets get injected.
We specify a natural way to select stations for injecting packets that is consistent with the $1$-activating constraint.
Let us consider a $(\rho,\beta)$ randomized adversary injecting packets into the stations of a perpetual channel with $n$ stations.
In a round, first the number of packets~$j$ to inject is determined.
Next, the adversary selects one station uniformly at random from the set of passive stations as \emph{virtually active}, subject to the proviso that if all the stations are active then there will be no virtually active station.
The active nodes along with a virtually active (if any) node make the set of \emph{eligible nodes}  for this round.
For each of the new $j$ packets,  an eligible station is assigned to it independently and uniformly at random and the packet is injected into it.

A simulating execution proceeds by performing a broadcast algorithm and injecting packets subject to the type $(\rho,\beta)$ of a considered adversary; in our  experiments $\beta=10$.
The execution is partitioned into \emph{stages}, such that a stage concludes with  the average packet delay of a batch of some~$K$ packets; this parameter is set to $K=5000$ in our experiments.
A stage begins by designating $K$ consecutively generated  packets as \emph{marked} and concludes when all the marked packets have been heard on the channel.
As the execution proceeds, packets are generated continuously, but if they are not marked then their delay is not measured.
As a stage ends, the average packet delay is determined by the $K$ marked packets.

We say that an average packet latency \emph{stabilizes} when the relative average packet latencies in any two stages in four consecutive stages differ by less than~5\%.
If the average packet delay stabilizes then it is considered as the output of the experiment, otherwise the channel is considered unstable. 
This assessment is conservative with respect to stability, in  that some latency may be recorded even if the system is unstable in the stochastic sense. 
The latencies obtained in simulation vary significantly among the simulated algorithms, so we depict them on exponential vertical scales in the charts.


\begin{figure}[t]
\begin{center}

\includegraphics[width=350pt]{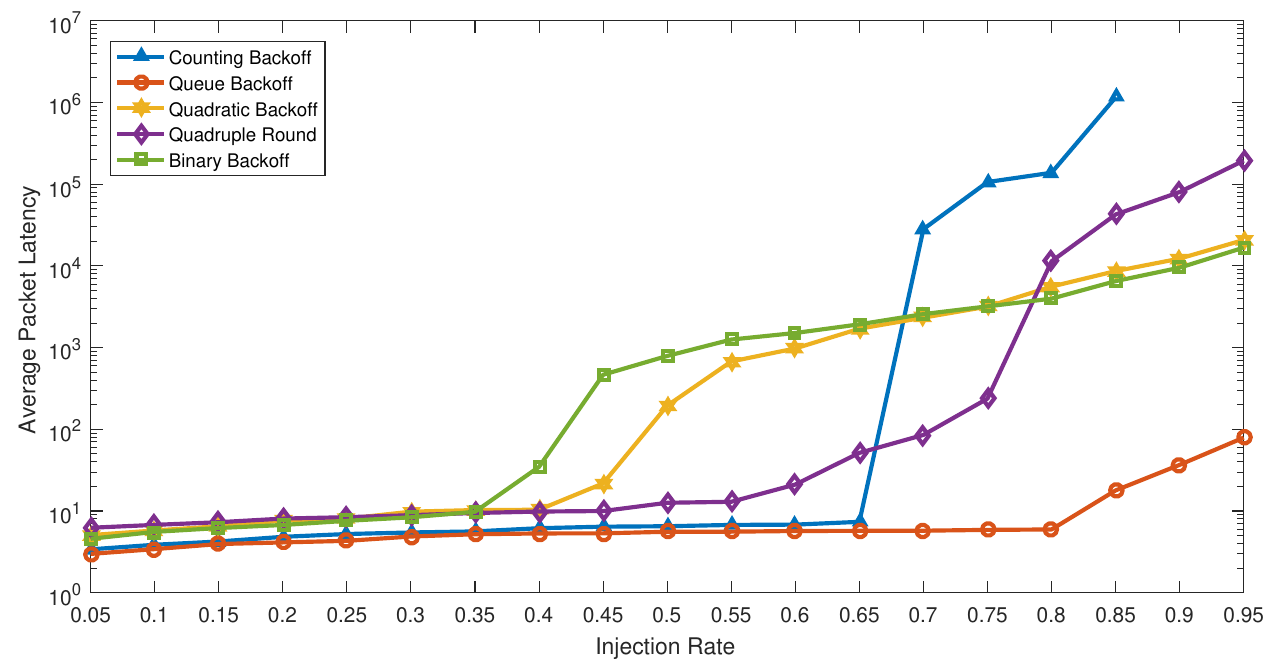}

\FFF

\parbox{\captionwidth}{
\caption{\label{fig:first}
A comparison of ad-hoc and backoff algorithms with respect to the average packet latency on $n=10$ stations for the full range of injection rates.}}
\end{center}
\end{figure}


\begin{figure}[t]
\begin{center}
\includegraphics[width=350pt]{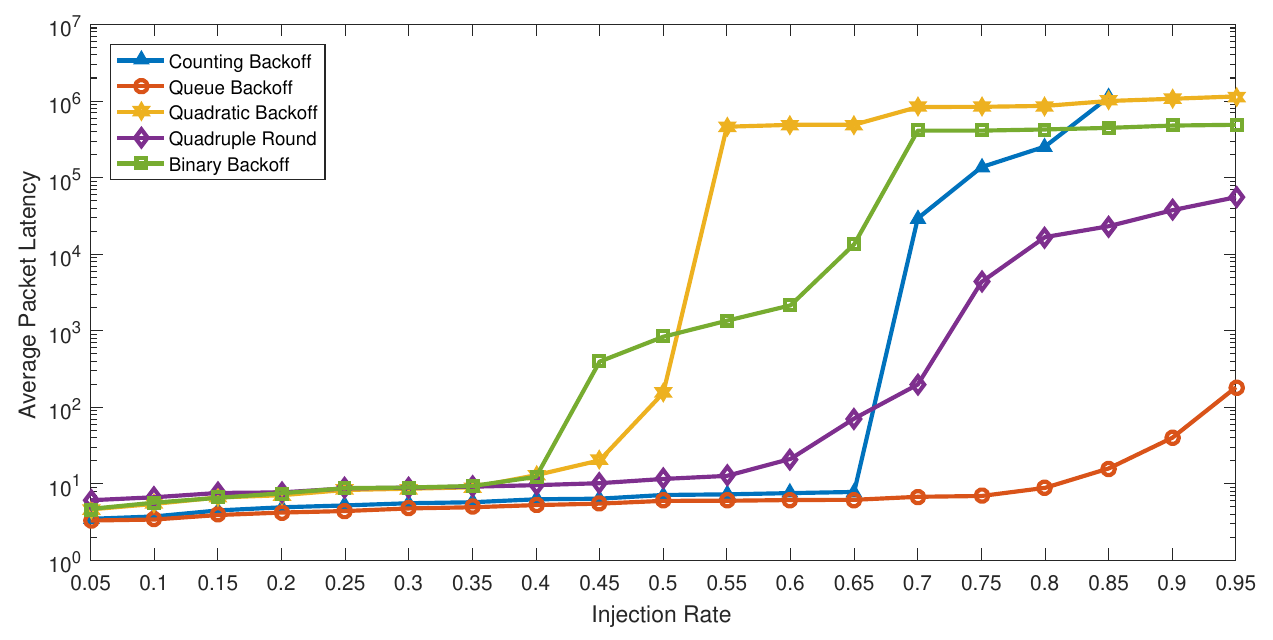}
\FFF

\parbox{\captionwidth}{
\caption{\label{fig:second}
A comparison of ad-hoc and backoff algorithms with respect to the average packet latency on $n=250$ stations for the full range of injection rates.}}
\end{center}
\end{figure}


\begin{figure}[t]
\begin{center}
\includegraphics[width=350pt]{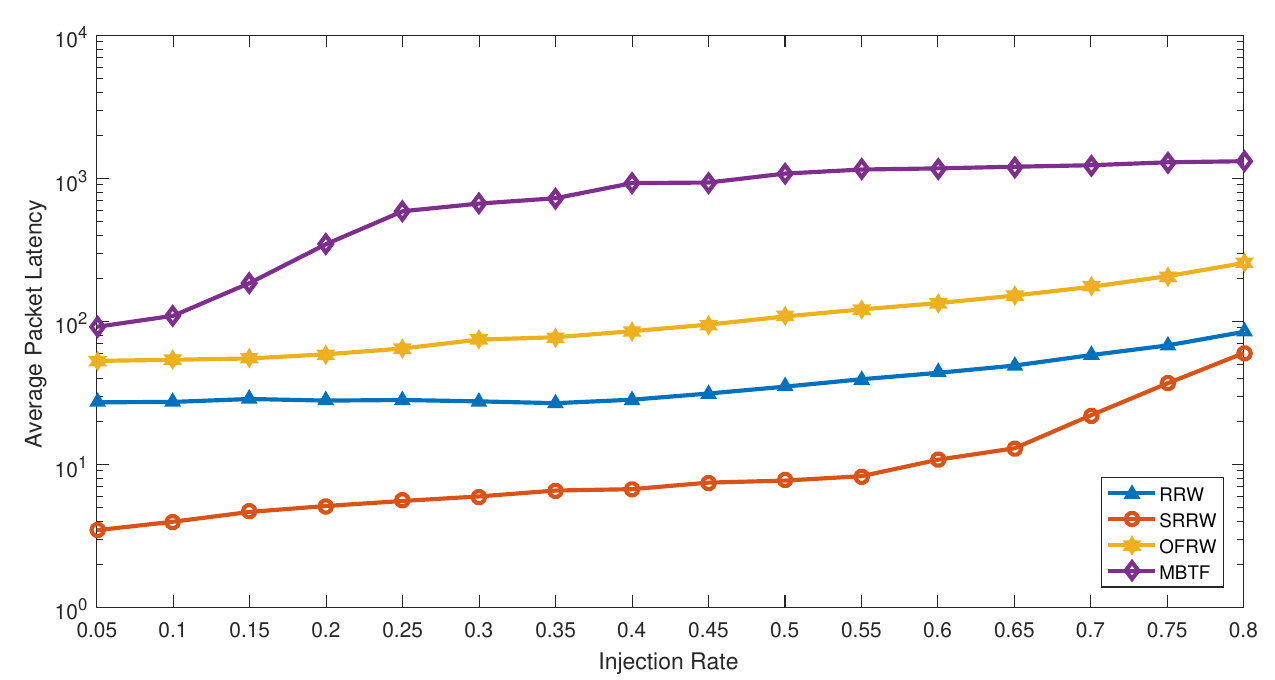}

\FFF
\parbox{\captionwidth}{
\caption{\label{fig:third}
A comparison of the average packet latency of token algorithms on $n=50$ stations for injection rates in the range $[0.05,0.80]$.}}
\end{center}/Users/bchlebus/My Drive/R/1-Current/1-Active/0-adv-broad-MAC/arXiv/2/Charts/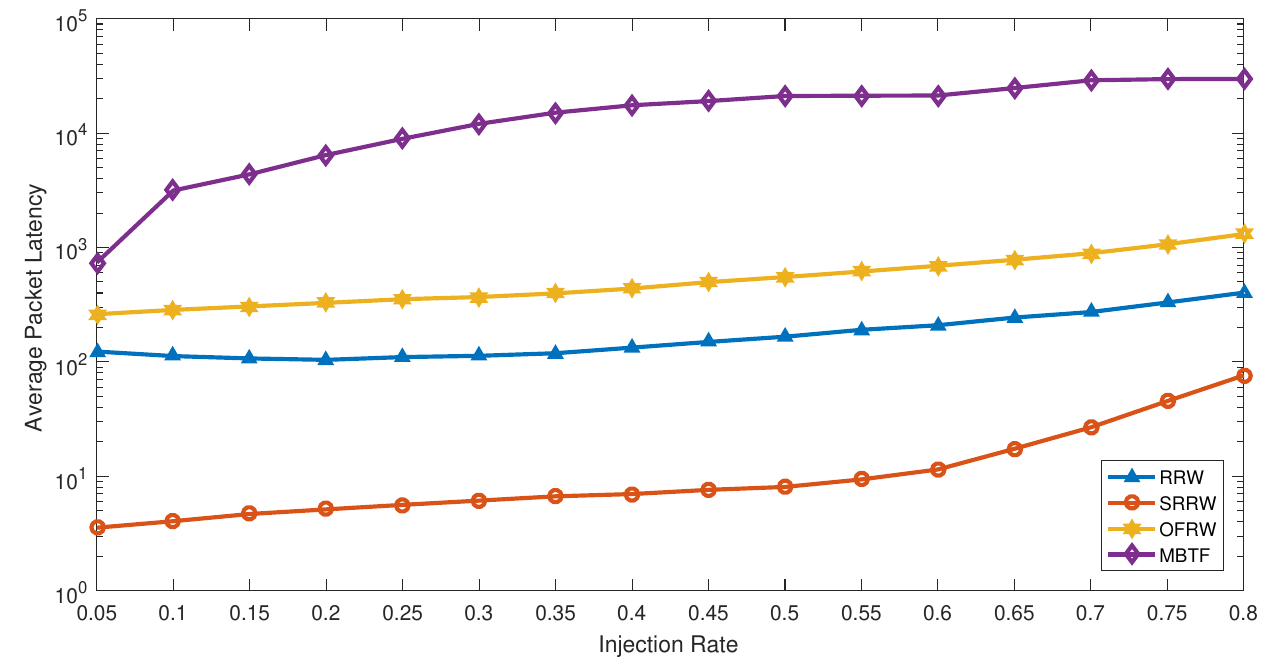
\end{figure}


\begin{figure}[t]
\begin{center}
\includegraphics[width=350pt]{RR250L-cropped.pdf}

\FFF
\parbox{\captionwidth}{
\caption{\label{fig:fourth}
A comparison of the  average packet latency of token algorithms on $n=250$ stations for injection rates in the range $[0.05,0.80]$.}}
\end{center}
\end{figure}


\begin{figure}[t]

\begin{center}
\includegraphics[width=350pt]{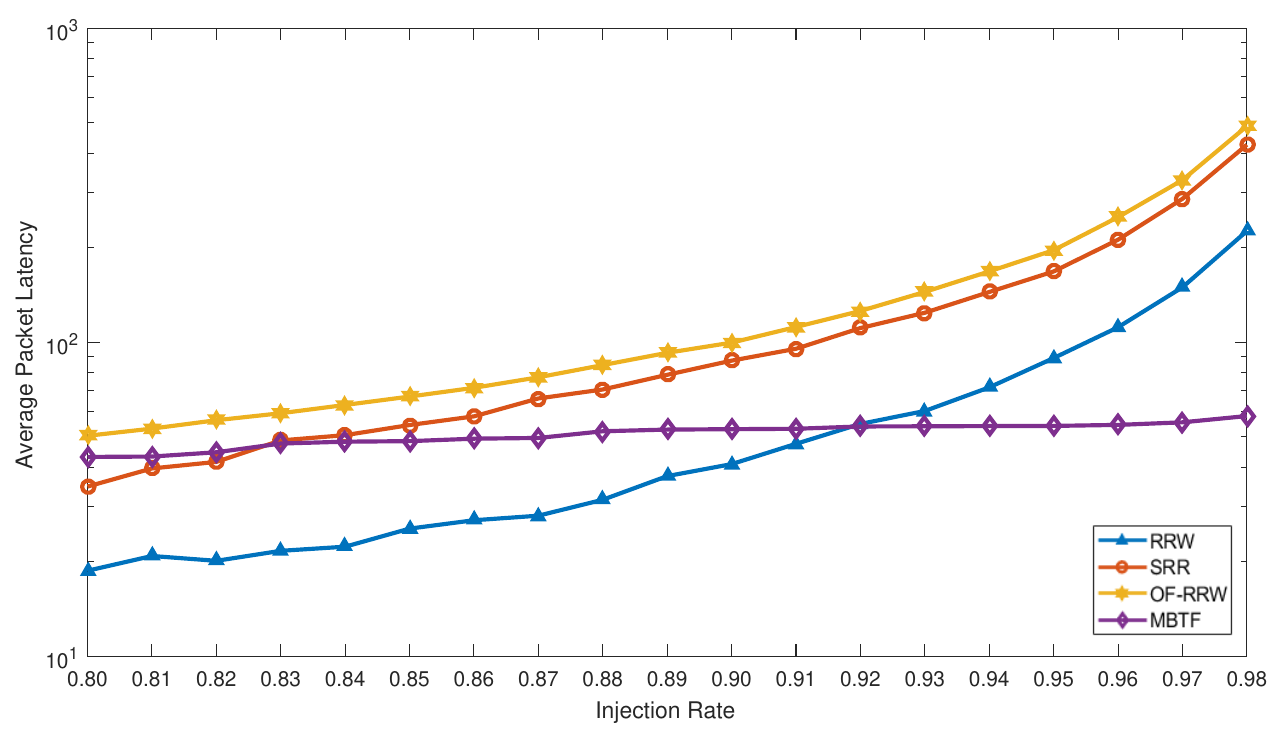}

\FFF

\parbox{\captionwidth}{
\caption{\label{fig:fifth}
A comparison of the  average packet latency of token algorithms on $n=10$ stations for injection rates in the range $[0.80 , 0.98]$.
}}
\end{center}
\end{figure}


\begin{figure}[t]
\begin{center}
\includegraphics[width=350pt]{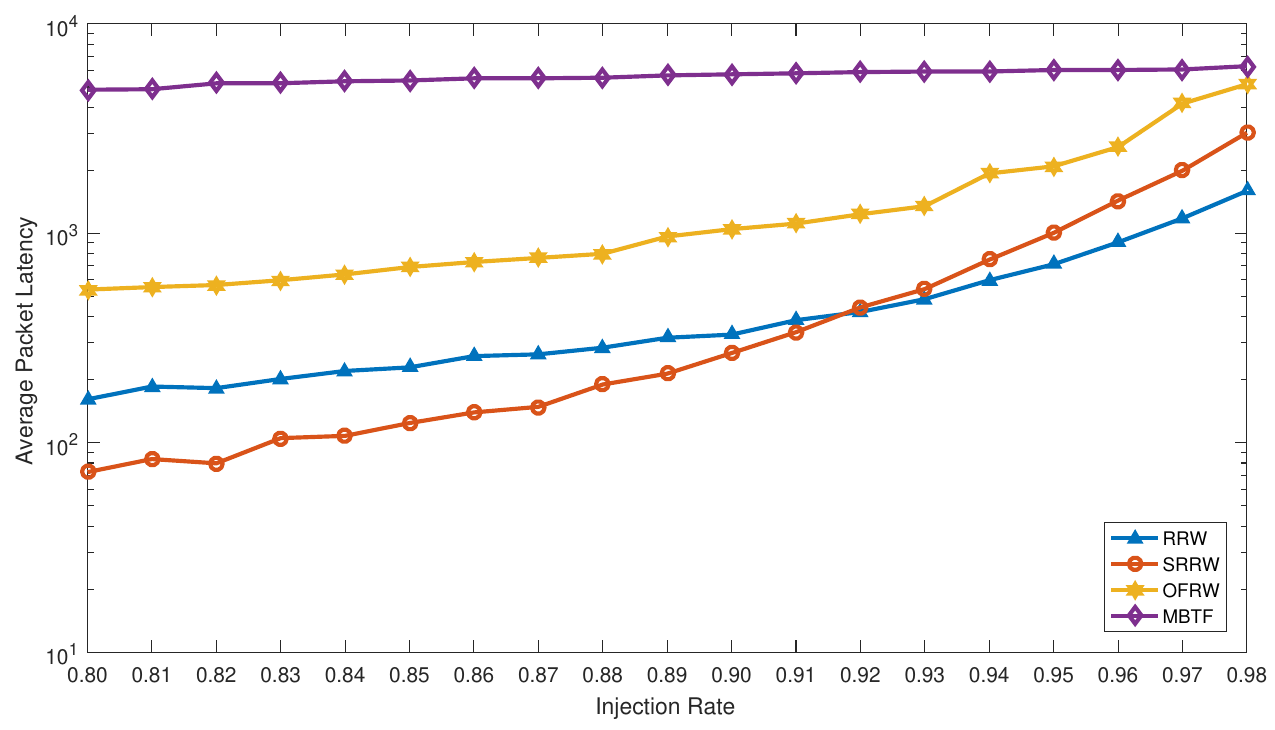}

\FFF
\parbox{\captionwidth}{
\caption{\label{fig:sixth}
A comparison of the  average packet latency of token algorithms on $n=100$ stations for  injection rates in the range $[0.80 , 0.98]$.}}
\end{center}
\end{figure}

Figures~\ref{fig:first} and~\ref{fig:second} present a comparison of ad-hoc deterministic algorithms and acknowledgement-based randomized algorithms in the whole spectrum of injection rates and for two sample sizes of the system: one relatively small and the other larger.
Latencies turn out to be comparable for small injection rates but vary significantly when injection rates approach~$1$.
In particular, the latencies for the algorithm \textsc{Counting-Backoff} do not stabilize in some range, which is consistent with this algorithm's instability in the worst-case sense for injection rates greater than~$\frac{1}{2}$.
The algorithm \textsc{Queue-Backoff} outperforms all the other algorithms. Individual comparisons depend on the number of stations; in particular, \textsc{Quadruple-Round} may outperform both backoff algorithms, or vice versa. 
The randomized algorithms \textsc{Binary-Exponential-Backoff} and \textsc{Quadratic-Backoff} have the largest latencies.
The mutual comparison of these two algorithms demonstrates the superiority of \textsc{Binary-Exponential-Backoff} as compared to  \textsc{Quadratic-Backoff}.
A theoretical comparison of stability of exponential and polynomial backoff algorithms was given by H\aa stad et al.~\cite{HastadLR-SICOMP96}.

Figures~\ref{fig:third} and~\ref{fig:fourth} show charts of experiments comparing token algorithms for injection rates up to~$0.8$, which is a range of moderate magnitude.
The relative performance is preserved for the two different sizes of the system.
The algorithm SRR is the most efficient and MBTF is the least efficient in both cases.

Figures~\ref{fig:fifth} and~\ref{fig:sixth} present experiments of the token algorithms for a range of large injection rates close to~$1$,  and for two sample sizes of the system: one relatively small and the other larger.
The best choice for large systems and low injection rates may be algorithm \textsc{SRR}.
Another observation is that \textsc{RRW} outperforms \textsc{OF-RRW} in this adversarial model, which is consistent with the analysis of the individual-injection rates adversaries in Section~\ref{sec:individual}, and is the opposite of the worst-case bounds given in Anantharamu et al.~\cite{AnantharamuCKR-JCSS19}.
The algorithm~\textsc{MBTF} may have the smallest latency for large injection rates, but when the number of stations is sufficiently small; this is consistent with the worst-case bound showed  in Anantharamu et al.~\cite{AnantharamuCKR-JCSS19}, which is a function of~$n^2$.

\section{Conclusion}

\label{sec:conclusion}

We presented a comprehensive view of broadcasting on adversarial multiple-access channels by deterministic distributed algorithms.
The given results demonstrate that performance bounds of broadcast algorithms depend on particular aspects of the broadcasting environment, which is defined by many components.
The featured examples include comparing perpetual channels versus ad-hoc ones, and comparing the models of an unconstrained leaky-bucket injection rate versus an individually-constrained injection rate. 
 
The leaky-bucket adversaries allow to investigate stability and latency of broadcasting on multiple-access channel without stochastic assumptions on packet injection.
The performance bounds of deterministic algorithms for such adversarial traffic are in the worst-case sense because the adversarial packet injection is constrained by specific upper bounds on how many packets can be injected in a bounded time interval as a function of the length of the interval.
The  prospect of  simulating leaky-bucket adversaries, for example, in order to verify some performance bounds, faces immediate challenges. 
One is that the constrains on the adversaries are in the form of an upper bound of how many packets the adversary may inject exercising the ``full power of packet generation,'' while this behavior does not necessarily result in the worst-case performance.
Another is that there are exponentially many strategies of the adversary to inject packets in a time interval, as a function of the length of the interval.

We propose a randomized model of adversarial packet injection, which allows for arbitrarily bursty traffic to occur with positive probability.
This model is amenable to simulations and the average performance of broadcasting algorithms can be discovered via simulated experiments. 
The examples of such simulations that we give indicate a complex landscape of behaviors of broadcast algorithms, with performance bounds depending on the number of stations in a perpetual channel as well as on the injection rates.
These experiments involve two popular randomized backoff protocols, binary-exponential and quadratic ones.
The most interesting injection rates are those close to the ultimate injection rate~$1$, as they differentiate the algorithms most.
Such differentiation is not necessarily simple or conclusive, as the relative packet latency of two broadcast algorithms may depend on the number of stations attached to the channel. 
Deriving quantitative performance bounds on the expected packet latency and queue sizes of the studied broadcast algorithms for the model of randomized adversaries is an interesting  direction of future research.

\bibliographystyle{plain}

\bibliography{comparative-mac}

\end{document}